\newtheorem{thm}{Theorem}[section]
\newtheorem{lem}[thm]{Lemma}
\newtheorem{remark}[thm]{Remark}
\definecolor{ItalianApricot}{rgb}{1,0.7,0.5}
\begin{document}

\newtheorem{defn}[thm]{Definition}
\newtheorem{example}[thm]{Example}
\newtheorem{query}[thm]{Question}

\numberwithin{equation}{section}

\numberwithin{equation}{section}

%
% General macros
%

\renewcommand{\epsilon}{\varepsilon}
\renewcommand{\phi}{\varphi}
\renewcommand{\setminus}{\smallsetminus}

\newcommand\+[1]{\mathcal{#1}}
\renewcommand\*[1]{\mathbf{#1}}

\newcommand{\seq}[1]{{\left\langle{#1}\right\rangle}}

\newcommand{\N}{\mathbb{N}}
\newcommand{\Q}{\mathbb{Q}}
\newcommand{\R}{\mathbb{R}}
\newcommand{\Z}{\mathbb{Z}}

%
% Macros specific to this paper
%

%
% Document

\title{Generating Randomness from a Computable, Non-random Sequence of Qubits}
\author{Tejas Bhojraj
\institute{}
\institute{Department of Mathematics\\
University of Wisconsin-Madison
WI,USA}
\email{bhojraj@wisc.edu}
 \and
\quad\qquad 
\institute{}
\email{\quad  \quad\qquad }
}
\def\titlerunning {Generating Randomness from a Computable, Non-random Sequence of Qubits}
\def\authorrunning{Tejas Bhojraj}

\maketitle

\begin{abstract}
Nies and Scholz \cite{unpublished} introduced the notion of a \emph{state} to describe an infinite sequence of qubits and defined quantum-Martin-L{\"o}f randomness for states, analogously to the well known concept of Martin-L{\"o}f randomness for elements of Cantor space (the space of infinite sequences of bits). We formalize how `measurement' of a state in a basis induces a probability measure on Cantor space. A state is `measurement random' (mR) if the measure induced by it, under any computable basis, assigns probability one to the set of Martin-L{\"o}f randoms. Equivalently, a state is mR if and only if measuring it in any computable basis yields a Martin-L{\"o}f random with probability one. While quantum-Martin-L{\"o}f random states are mR, the converse fails: there is a mR state, $\rho$ which is not quantum-Martin-L{\"o}f random. In fact, something stronger is true. While $\rho$ is computable and can be easily constructed, measuring it in any computable basis yields an arithmetically random sequence with probability one. I.e., classical arithmetic randomness can be generated from a computable, non-quantum random sequence of qubits. 
\end{abstract}

\section{Introduction}
Martin-L{\"o}f randomness for infinite sequences of bits, a central concept in algorithmic randomness\cite{misc}, has recently been generalized to infinite sequences of qubits \cite{unpublished}. As is standard in the literature in algorithmic randomness and mathematical logic, let $2^{\omega}$ denote the collection of infinite sequences of bits, let $2^n$ denote the set of bit strings of length $n$, $2^{<\omega} := \bigcup_{n} 2^n$ and let $2^{\leq \omega}:=  2^{<\omega} \cup 2^{ \omega}$. $X\in 2^{\omega}$ is said to be Martin-L{\"o}f random (MLR) if it is not in any effectively null set with respect to the uniform measure on $2^{\omega}$ \cite{misc}. Nies and Scholz introduced the notion of a \emph{state} to describe an infinite sequence of qubits \cite{unpublished} and then defined quantum-Martin-L{\"o}f randomness for states. We quickly sketch those parts of their work most relevant to the arguments in this paper. All definitions in the introduction section are from \cite{unpublished} or \cite{misc}. See \cite{misc} and \cite{misc1} for a detailed introduction to Martin-L{\"o}f randomness and computability theory.

\begin{defn}
A \emph{state}, $\rho=(\rho_n)_{n\in \mathbb{N}}$ is an infinite sequence of density matrices such that $\rho_{n} \in \mathbb{C}^{2^{n} \times 2^{n}}$ and $\forall n$,  $PT_{\mathbb{C}^{2}}(\rho_n)=\rho_{n-1}$. 
\end{defn}
Here, $PT_{\mathbb{C}^{2}}$ denotes the partial trace which `traces out' the last qubit from $\mathbb{C}^{2^n}$. $\rho$ is a infinite sequence of qubits whose first $n$ qubits are given by $\rho_n$. For this notion to be meaningful, $\rho$ is required to be coherent in the following sense; for all $n$, $\rho_n$, when `restricted' via the partial trace to it's first $n-1$ qubits, has the same measurement statistics as the state on $n-1$ qubits given by $\rho_{n-1}$. We now sketch a few notions from computability theory pertinent to us. Let $A\in 2^{\omega}$. We define an $A$-computable function to be a total function that can be realized by a Turing machine with $A$ as an oracle. By `computable', we will refer to $\emptyset$-computable. The concept of an $A$-computable sequence of natural numbers will come up frequently in our discussion.
\begin{defn}
A sequence $(a_n)_{n\in \mathbb{N}}$ is said to be $A$-computable if there is a $A$-computable function $\phi$ such that $\phi(n)=a_n$
\end{defn}

\begin{defn}
A special projection is a hermitian projection matrix with complex algebraic entries. 
\end{defn}
Since the algebraic complex numbers have a computable presentation (see \cite{unpublished}), we may identify a special projection with a natural number and hence talk about computable sequences of special projections. Let $I$ denote the two by two identity matrix.
\begin{defn}

A quantum $\Sigma_{1}^0$
set (or q-$\Sigma_{1}^0$
set for short) G is a computable
sequence of special projections $G=(p_{i})_{i\in \mathbb{N}}$ such that $p_i$ is $2^i$ by $2^i$ and range$(p_i \otimes I) \subseteq$ range $(p_{i+1}), \forall i\in \mathbb{N}$. 

\end{defn}

While a $2^n$ by $2^n$ special projection may be thought of as a computable projective measurement on a system of $n$ qubits, a q-$\Sigma_{1}^0$
class corresponds to a computable sequence of projective measurements on longer and longer systems of qubits and mirrors the concept of a $\Sigma_{1}^0$
class in computability theory. One of the many equivalent ways of defining a $\Sigma_{1}^0$
class is as follows. If $C \subset 2^n$, let $\llbracket C \rrbracket \subseteq  2^{\omega}$ be the set of all $X \in 2^{\omega}$ such that the initial segment of $X$ of length $n$ is in $C$.
\begin{defn}
\label{def:10}
A $\Sigma_{1}^0$ class $S \subseteq 2^{\omega}$ is any set of the form, \[S=\bigcup_{i\in \mathbb{N}} \llbracket A_{i} \rrbracket\] where
    \begin{enumerate}
        \item $A_{i} \subseteq 2^i, \forall i\in \mathbb{N}$ 
        \item
        The indices of $A_{i}$ form a computable sequence. (Being a finite set, each $A_i$ has a natural number coding it.
        \item $\llbracket A_{i} \rrbracket\subseteq \llbracket A_{i+1} \rrbracket, \forall i\in \mathbb{N}$
    \end{enumerate}
\end{defn}
A $\Sigma_{1}^0$ class, S is coded (non-uniquely) by the index of the total computable function generating the sequence $(A_{i})_{ i\in \mathbb{N}}$ occurring in (2) in the definition of $S$. Hence, the notion of a computable sequence of $\Sigma_{1}^0$ classes makes sense (see \cite{misc}, section 3.2). One sees that the special projections $q_i$ in the definition of the q-$\Sigma_{1}^0$, $G$ play the role of the $A_i$s which generate a the $\Sigma_{1}^0$ class, $S$. The following notion is a quantum analog of the uniform measure of $S$ which equals $\lim_{n}(2^{-n}|A_n|)$, where $|.|$ refers to the cardinality. (The uniform measure on $2^{\omega}$ is the measure induced by letting the measure of $ \llbracket \tau \rrbracket$ to be $2^{-|\tau|}$ for each $\tau \in 2^{<\omega}$. Here, $|\tau|:=n$ if $\tau \in 2^n$.)
\begin{defn}

If $G=(p_{n})_{n\in \mathbb{N}}$ is a q-$\Sigma_{1}^0$ class, define $\tau(G):=\lim_{n}(2^{-n}|q_n|)$ where, $|q_n|$ is the rank of $q_n$.
\end{defn}
 Informally, a q-$\Sigma_{1}^0$ class,  $G=(p_{n})_{n\in \mathbb{N}}$ may be thought of as an observable whose expected value, when `measured' on a state $\rho=(\rho_{n})_{n\in \mathbb{N}}$ is $\rho(G):=\lim_{n}$ tr $(\rho_{n}p_n)$. The positive-semidefiniteness of density matrices, the coherence of the components of $\rho$ and condition (3) in definition 1.5 ensure that $ tr(\rho_{n}p_n) $ is non decreasing in $n$. As   $tr(\rho_{n}p_n)\leq 1$ for all $n$, $\rho(G):=\lim_{n}$ tr $(\rho_{n}p_n)$ exists.

\begin{defn}
A classical Martin-L{\"o}f test (MLT) is a computable sequence, $(S_{m})_{m \in \mathbb{N} }$ of $\Sigma_{0}^{1}$ classes such that the uniform measure of $S_m$ is less than or equal to $2^{-m}$ for all m.
\end{defn}
\begin{defn}
A quantum Martin-L{\"o}f test (q-MLT) is a computable sequence, $(S_{m})_{m \in \mathbb{N}}$ of q-$\Sigma_{0}^{1}$ classes such that $\tau(S_m)$ is less than or equal to $2^{-m}$ for all m.
\end{defn}
Having established the necessary prerequisites, we can define a quantum Martin-L{\"o}f random (q-MLR) state. Roughly speaking, a state is q-MLR if it cannot be `detected by projective measurements of arbitrarily small rank'.
\begin{defn}
$\rho$ is q-MLR if for any q-MLT $(S_{m})_{m \in \mathbb{N}}$, inf$_{m \in \mathbb{N}}\rho(S_m)=0$.
\end{defn}
\begin{defn}
$\rho$ is said to fail the q-MLT $(S_{m})_{m \in \mathbb{N}}$, at order $\delta$, if inf$_{m \in \mathbb{N}}\rho(S_m)>\delta$.
\end{defn}

\section{Measuring a state induces a measure on $2^{\omega}$}
To fix notation, let $X(n)$ denote the $n$th bit of an $X \in 2^{\leq \omega}$ , let $p(E)$ stand for the probability of the event $E$.
\begin{defn}
\label{defn:78}
An A-computable measurement system $B= ((b^{n}_{0},b^{n}_{1}))_{n\in \mathbb{N}}$ (or just `measurement system' for short) is a sequence of orthonormal bases for $\mathbb{C}^{2}$ such that each $b^{n}_{i}$ is complex algebraic and the sequence $ ((b^{n}_{0},b^{n}_{1}))_{n\in \mathbb{N}}$ is A-computable.
\end{defn}

Let $\rho=(\rho_n)_{n\in \mathbb{N}}$ be a state and $B= ((b^{n}_{0},b^{n}_{1}))_{n\in \mathbb{N}}$ be a measurement system. We now work towards formalizing a notion of \emph{qubitwise} measurement of $\rho$ in the bases in $B$.
 A (probability) premeasure \cite{misc1},$p$ (also called a measure representation \cite{misc}), is a function from the set of all finite bit strings to $[0,1]$ satisfying  $\forall n$, $\forall \tau \in 2^n, p(\tau)=p(\tau 0)+p(\tau 1)$. $p$ induces a measure on $2^{\omega}$ which is seen to be unique by Carath\'eodory's extension theorem (See 6.12.1 in \cite{misc1}). Flipping a $0,1$ sided fair coin repeatedly induces a probability measure (which happens to be the uniform measure) on $2^{\omega}$ as follows. Let the random variable $Z(n)$ denote the outcome of the the $n$th coin flip. The sequence $(Z(n))_{n\in \mathbb{N}}$ induces a premeasure, $p$, on $ 2^{<\omega}$ which extends to the uniform measure on $2^{\omega}$. Here, $p(\sigma)=2^{-n}$ is the probability that $Z(i)=\sigma(i)$ for all $i\leq |\sigma|$. Similarly the act of measuring $\rho$ qubit by qubit in $B$ induces a premeasure on $2^{<\omega}$ which extends to a probability measure (denoted $\mu_{\rho}^{B}$) on $2^{\omega}$ as follows. Let the random variable $X(n)$ be the $0,1$ valued outcome of the measurement of the $n$th qubit of $\rho$. Let $p$ be the premeasure induced by the sequence $(X(n))_{n \in \mathbb{N}}$ on $ 2^{<\omega}$. $p$ extends to $\mu_{\rho}^{B}$ on $2^{\omega}$. For any $A \subseteq 2^{\omega}$, $\mu_{\rho}^{B}(A)$ is the probability that $X \in A$ where $X$ is the element of $2^{\omega}$ obtained in the limit by the qubit by qubit measurement of $\rho$ in $B$. The most conspicuous difference between the two situations is that while the $(Z(n))_{n\in \mathbb{N}}$ are independent, $(X(n))_{n\in \mathbb{N}}$ need not be independent as the elements of $\rho$ can be entangled. We now formalize the above. The following calculations follow from standard results mentioned, for example, in \cite{bookA}. 
 
 We now define $(X(n))_{n \in \mathbb{N}}$ and $p$, the induced  premeasure. Measure $\rho_1$ by the measurement operators $\{|b^{1}_0\big>\big<b^{1}_0|,|b^{1}_1\big>\big<b^{1}_1|\}$ and define $X(1) := i$ if $b^{1}_{i}$ was obtained by the above measurement. Let $\hat{\rho_{2}}$ be the density matrix corresponding to the post-measurement state of $\rho_2$ given that $\rho_2$ yields $|b^{1}_{X(1)}\big> \big<b^{1}_{X(1)}| \otimes I$ if measured in the system 
 \[ (|b^{1}_{i}\big> \big<b^{1}_{i}|   \otimes I )_{i\in\{0,1\}}.\]
  
 I.e,
\[\hat{\rho_{2}} = \dfrac{(|b^{1}_{X(1)}\big> \big<b^{1}_{X(1)}| \otimes I) \rho_2 (|b^{1}_{X(1)}\big> \big<b^{1}_{X(1)}| \otimes I)}{tr((|b^{1}_{X(1)}\big> \big<b^{1}_{X(1)}| \otimes I) \rho_2 \big)}. \]

 To define $X(2)$, measure $\hat{\rho_{2}}$ by the measurement operators 
\[ (I \otimes |b^{2}_{i}\big> \big<b^{2}_{i}|)_{i\in\{0,1\}},\]
and set $X(2):= i$ if $I \otimes |b^{2}_{i}\big> \big<b^{2}_{i}|$ is obtained. We use $\hat{\rho_{2}}$ instead of $\rho_2$ to define $X(2)$ to account for the previous measurement of the first qubit. $X(n)$ is defined similarly.
By the above, 
 \[p(ij):=p(X(1)=i,X(2)=j) = p(X(1)=i)p(X(2)=j|X(1)=i)=\]\[p(X(1)=i) tr \big[I \otimes |b^{2}_{j}\big> \big<b^{2}_{j}| (\dfrac{(|b^{1}_{i}\big> \big<b^{1}_{i}| \otimes I) \rho_2 (|b^{1}_{i}\big> \big<b^{1}_{i}| \otimes I)}{tr((|b^{1}_{i}\big> \big<b^{1}_{i}| \otimes I) \rho_2)})\big].\]
 
 Since $PT_{\mathbb{C}^{2}}(\rho_2)=\rho_{1}$, $p(X(1)=i)=tr((|b^{1}_{i}\big> \big<b^{1}_{i}| \otimes I) \rho_2\big)$. So, \[p(ij)=tr \big[ \rho_2 (|b^{1}_{i}b^{2}_{j}\big> \big<b^{1}_{i}b^{2}_{j}|)\big].\]
 
 Given $\tau \in 2^n$, similar calculations show that 
 \begin{align}
 \label{eq:21}
    p(\tau) := p(X(1)=\tau(1), \dots,X(n)=\tau(n)) = tr \big[ \rho_n (|\bigotimes_{i=1}^{n} b^{i}_{\tau(i)}\big> \big<\bigotimes_{i=1}^{n} b^{i}_{\tau(i)}\big| \big]. 
 \end{align}

 This defines $p$. The following lemma shows that $p(.)$ is a premeasure. Define $\mu^{B}_{\rho}$ to be the unique probability measure induced by it.
 \begin{lem}
 $\forall n$, $\forall \tau \in 2^n, p(\tau)=p(\tau 0)+p(\tau 1)$
 \end{lem}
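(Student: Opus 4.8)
The plan is to reduce the claim to the closed-form expression~\eqref{eq:21} for $p$, together with two elementary facts: the completeness relation for an orthonormal basis of $\mathbb{C}^{2}$, and the defining ``adjunction'' property of the partial trace, namely that $\mathrm{tr}\big[\sigma\,(M\otimes I)\big]=\mathrm{tr}\big[PT_{\mathbb{C}^{2}}(\sigma)\,M\big]$ for every density matrix $\sigma$ on $\mathbb{C}^{2^{n+1}}$ and every operator $M$ on $\mathbb{C}^{2^{n}}$.

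Fix $n$ and $\tau\in 2^{n}$, and write $v:=\bigotimes_{i=1}^{n}b^{i}_{\tau(i)}\in\mathbb{C}^{2^{n}}$, a unit vector. For $j\in\{0,1\}$, the length-$(n{+}1)$ string $\tau j$ extending $\tau$ by $j$ satisfies $\bigotimes_{i=1}^{n+1}b^{i}_{(\tau j)(i)}=v\otimes b^{n+1}_{j}$, so the rank-one projection appearing in~\eqref{eq:21} for $\tau j$ factors as $|v\rangle\langle v|\otimes|b^{n+1}_{j}\rangle\langle b^{n+1}_{j}|$. Applying~\eqref{eq:21} at length $n{+}1$ to $\tau 0$ and $\tau 1$ and adding, the linearity of the trace and the bilinearity of $\otimes$ give
\[ p(\tau 0)+p(\tau 1)=\mathrm{tr}\Big[\rho_{n+1}\big(|v\rangle\langle v|\otimes(|b^{n+1}_{0}\rangle\langle b^{n+1}_{0}|+|b^{n+1}_{1}\rangle\langle b^{n+1}_{1}|)\big)\Big]. \]
Since $(b^{n+1}_{0},b^{n+1}_{1})$ is an orthonormal basis of $\mathbb{C}^{2}$, the completeness relation yields $|b^{n+1}_{0}\rangle\langle b^{n+1}_{0}|+|b^{n+1}_{1}\rangle\langle b^{n+1}_{1}|=I$, so the right-hand side equals $\mathrm{tr}\big[\rho_{n+1}(|v\rangle\langle v|\otimes I)\big]$. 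By the partial-trace adjunction with $M=|v\rangle\langle v|$ and the coherence condition $PT_{\mathbb{C}^{2}}(\rho_{n+1})=\rho_{n}$ required of a state, this equals $\mathrm{tr}\big[\rho_{n}\,|v\rangle\langle v|\big]$, which by~\eqref{eq:21} at length $n$ is precisely $p(\tau)$.

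I expect no genuine obstacle: the content is entirely in unwinding~\eqref{eq:21}. The only points demanding a little care are the bookkeeping that the projection onto $\bigotimes_{i=1}^{n+1}b^{i}_{(\tau j)(i)}$ really does split as $|v\rangle\langle v|\otimes|b^{n+1}_{j}\rangle\langle b^{n+1}_{j}|$, and that the tensor factor on which $PT_{\mathbb{C}^{2}}$ acts is indeed the last one, consistent with the order in which the qubits of $\rho$ were measured when $p$ was defined. One could alternatively bypass~\eqref{eq:21} and argue directly from the conditional-probability definition of $p(\tau 0)$ and $p(\tau 1)$, but the route through~\eqref{eq:21} is shorter.
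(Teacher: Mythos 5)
Your proposal is correct and is essentially the same argument as the paper's: factor the rank-one projection for $\tau j$ as $|v\rangle\langle v|\otimes|b^{n+1}_{j}\rangle\langle b^{n+1}_{j}|$, sum over $j$ using completeness of the basis $(b^{n+1}_{0},b^{n+1}_{1})$, and then use the partial-trace identity together with $PT_{\mathbb{C}^{2}}(\rho_{n+1})=\rho_{n}$ to reduce to $\mathrm{tr}[\rho_{n}|v\rangle\langle v|]=p(\tau)$. You merely make explicit the trace/partial-trace adjunction step that the paper writes as $\mathrm{tr}[(A\otimes I)\rho_{n+1}]=\mathrm{tr}[A\rho_{n}]$ without comment.
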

 \begin{proof}
Noting that for $j\in \{0,1\}$,
  \[
    \rho_{n+1} (|\bigotimes_{i=1}^{n} b^{i}_{\tau(i)} \otimes b^{n+1}_{j}\big> \big<\bigotimes_{i=1}^{n} b^{i}_{\tau(i)} \otimes b^{n+1}_{j}|) =\rho_{n+1} (|\bigotimes_{i=1}^{n} b^{i}_{\tau(i)}\big> \big<\bigotimes_{i=1}^{n} b^{i}_{\tau(i)}|\otimes |b^{n+1}_{j}\big>\big<b^{n+1}_{j}|),\]
    
    and letting $A:=|\bigotimes_{i=1}^{n} b^{i}_{\tau(i)}\big> \big<\bigotimes_{i=1}^{n} b^{i}_{\tau(i)}| $, the right hand side is
     \[=
    tr\big[(A\otimes |b^{n+1}_{0}\big>\big<b^{n+1}_{0}|)\rho_{n+1} + (A\otimes |b^{n+1}_{1}\big>\big<b^{n+1}_{1}|)\rho_{n+1} ]  \]
     \[=
    tr\big[(A\otimes (|b^{n+1}_{0}\big>\big<b^{n+1}_{0}|  + |b^{n+1}_{1}\big>\big<b^{n+1}_{1}|))\rho_{n+1} ]=
    tr\big[(A\otimes I) \rho_{n+1} ] = tr[A \rho_n] = p(\tau)\]
\end{proof}
\begin{remark}
\label{rem:1}
 If $B$ is $S$-computable and $\rho$ is $T$-computable, then the sequence $\{\mu^{B}_{\rho}(\sigma)\}_{\sigma \in \mathbb{N}}$ is $S\oplus T$-computable.
\end{remark} 
Here, $S\oplus T$ is obtained by putting $S$ on the even bits and $T$ on the odd bits \cite{misc}.

\section{Measurement Randomness}
Let $MLR\subset 2^\omega$ be the set of MLR bitstrings. If $\rho$ is a state and $B$ a measurement system, $\mu^{B}_{\rho} (MLR)$ is the probability of getting a MLR bitstring by a qubit-wise measurement of $\rho$ as described in the previous section. 
\begin{defn}
$\rho$ is measurement random (mR) if for any computable measurement system, B, $\mu^{B}_{\rho} (MLR)=1$ 
\end{defn}
\begin{thm}
All q-MLR states are also mR states.
\end{thm}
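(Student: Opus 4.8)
Fix a q-MLR state $\rho$ and a computable measurement system $B=((b^{n}_{0},b^{n}_{1}))_{n\in\mathbb{N}}$; the goal is $\mu^{B}_{\rho}(MLR)=1$. Fix a universal Martin-L\"of test $(U_{m})_{m\in\mathbb{N}}$, so that $2^{\omega}\setminus MLR=\bigcap_{m}U_{m}$ (see \cite{misc}), and present each test component as in Definition~\ref{def:10}: $U_{m}=\bigcup_{i}\llbracket A^{m}_{i}\rrbracket$ with $A^{m}_{i}\subseteq 2^{i}$, $\llbracket A^{m}_{i}\rrbracket\subseteq\llbracket A^{m}_{i+1}\rrbracket$, and $\langle m,i\rangle\mapsto A^{m}_{i}$ computable. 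The whole argument reduces to building, uniformly in $m$, a q-$\Sigma^{0}_{1}$ class $G_{m}$ with $\tau(G_{m})\le 2^{-m}$ and $\rho(G_{m})=\mu^{B}_{\rho}(U_{m})$. Granting this, $(G_{m})_{m}$ is a q-MLT, so q-MLR of $\rho$ forces $\inf_{m}\mu^{B}_{\rho}(U_{m})=\inf_{m}\rho(G_{m})=0$; since $2^{\omega}\setminus MLR\subseteq U_{m}$ for every $m$, this gives $\mu^{B}_{\rho}(2^{\omega}\setminus MLR)=0$, i.e.\ $\mu^{B}_{\rho}(MLR)=1$.

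To construct $G_{m}$, I would transport the cylinders of $2^{\omega}$ into the bases of $B$. For $\sigma\in 2^{n}$ write $v_{\sigma}:=\bigotimes_{i=1}^{n}b^{i}_{\sigma(i)}\in\mathbb{C}^{2^{n}}$; because each $(b^{n}_{0},b^{n}_{1})$ is orthonormal, $\{v_{\sigma}:\sigma\in 2^{n}\}$ is an orthonormal basis of $\mathbb{C}^{2^{n}}$. Hence for any finite $C\subseteq 2^{n}$ the matrix $p_{C}:=\sum_{\sigma\in C}|v_{\sigma}\rangle\langle v_{\sigma}|$ is a hermitian projection of rank $|C|$ with complex algebraic entries, i.e.\ a special projection; and by \eqref{eq:21}, $\mathrm{tr}(\rho_{n}p_{C})=\sum_{\sigma\in C}p(\sigma)=\mu^{B}_{\rho}(\llbracket C\rrbracket)$. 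Now set $G_{m}:=(p_{A^{m}_{i}})_{i\in\mathbb{N}}$.

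Three checks remain. (i) $G_{m}$ is a q-$\Sigma^{0}_{1}$ class: the map $\langle m,i\rangle\mapsto p_{A^{m}_{i}}$ is computable since $B$ and $\langle m,i\rangle\mapsto A^{m}_{i}$ are; and since $v_{\sigma}\otimes b^{n+1}_{j}=v_{\sigma j}$ we have $p_{A^{m}_{i}}\otimes I=p_{A^{m}_{i}\cdot\{0,1\}}$ with $A^{m}_{i}\cdot\{0,1\}:=\{\sigma j:\sigma\in A^{m}_{i},\,j\in\{0,1\}\}$, while $\llbracket A^{m}_{i}\rrbracket\subseteq\llbracket A^{m}_{i+1}\rrbracket$ forces $A^{m}_{i}\cdot\{0,1\}\subseteq A^{m}_{i+1}$ (both are sets of length-$(i{+}1)$ strings), so $\mathrm{range}(p_{A^{m}_{i}}\otimes I)\subseteq\mathrm{range}(p_{A^{m}_{i+1}})$. (ii) $\tau(G_{m})=\lim_{i}2^{-i}|A^{m}_{i}|$, which is the uniform measure of $U_{m}$ and hence $\le 2^{-m}$; with the uniformity from (i), $(G_{m})_{m}$ is a q-MLT. (iii) $\rho(G_{m})$, which by definition is $\lim_{i}\mathrm{tr}(\rho_{i}p_{A^{m}_{i}})$ (the limit exists, as noted for q-$\Sigma^{0}_{1}$ classes in Section~1), equals $\lim_{i}\mu^{B}_{\rho}(\llbracket A^{m}_{i}\rrbracket)=\mu^{B}_{\rho}(U_{m})$ by continuity of $\mu^{B}_{\rho}$ along $\llbracket A^{m}_{i}\rrbracket\uparrow U_{m}$.

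The measure-theoretic limits and the coding bookkeeping are routine; the load-bearing point is the single observation that measuring in the \emph{fixed} orthonormal system $B$ turns the classical cylinder $\llbracket C\rrbracket$ into an honest rank-$|C|$ projection $p_{C}$ with $\mathrm{tr}(\rho_{n}p_{C})=\mu^{B}_{\rho}(\llbracket C\rrbracket)$, so that a classical $\Sigma^{0}_{1}$ class lifts to a q-$\Sigma^{0}_{1}$ class without inflating either $\tau$ or the $\rho$-value. The step I would be most careful with is the nesting condition $\mathrm{range}(p_{A^{m}_{i}}\otimes I)\subseteq\mathrm{range}(p_{A^{m}_{i+1}})$ in (i), which is exactly where $\llbracket A^{m}_{i}\rrbracket\subseteq\llbracket A^{m}_{i+1}\rrbracket$ and the orthonormality of the $b$'s are used together.
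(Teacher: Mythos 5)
Your proposal is correct and follows essentially the same route as the paper: both lift the components $A^{m}_{i}$ of the universal Martin-L\"of test to special projections $\sum_{\sigma\in A^{m}_{i}}|\bigotimes_{q}b^{q}_{\sigma(q)}\rangle\langle\bigotimes_{q}b^{q}_{\sigma(q)}|$ and use equation \eqref{eq:21} to identify $\mathrm{tr}(\rho_{i}\,p_{A^{m}_{i}})$ with $\mu^{B}_{\rho}(\llbracket A^{m}_{i}\rrbracket)$, concluding via the resulting q-MLT. The only difference is presentational — the paper argues by contradiction from $\mu^{B}_{\rho}(2^{\omega}\setminus MLR)>\delta$, while you establish the exact identity $\rho(G_{m})=\mu^{B}_{\rho}(U_{m})$ and conclude directly.
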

\begin{proof}

 Let $\rho=(\rho_n)_{n\in \mathbb{N}}$ be q-MLR. Suppose towards a contradiction that there is a $\delta \in (0,1)$ and a computable $B= ((b^{n}_{0},b^{n}_{1}))_{n\in \mathbb{N}}$ such that $\mu^{B}_{\rho} (2^{\omega}/MLR)>\delta$. Let $(S^{m})_m$ be the universal MLT \cite{misc} and let for all $m$,
\begin{align}
\label{eq:20}
   S^{m}=\bigcup_{m\leq i} \llbracket A^{m}_{i} \rrbracket, 
\end{align}

where the $A^{m}_{i}$s satisfy the conditions of Definition \ref{def:10}. By the definition of a MLT, for all $m$ and all $i\geq m$, we can write $A^{m}_{i}= \{\tau^{m,i}_{1},\dots,\tau^{m,i}_{k^{m,i}}\} \subset 2^i$ for some $0 \leq k^{m,i} \leq 2^{i-m}$. Now define a q-MLT as follows. For all $m$ and $i\geq m$, let $\tau_{a} = \tau_{a}^{m,i}$ for convenience and define the special projection:
\begin{align}
\label{eq:23}
   p^{m}_{i}= \sum_{a\leq k^{m,i}} (|\bigotimes_{q=1}^{i} b^{q}_{\tau_{a}(q)}\big> \big<\bigotimes_{q=1}^{i} b^{q}_{\tau_{a}(q)}\big|\big).
\end{align}
Letting $P^{m}:=(p^{m}_{i})_{m\leq i}$, we see that $(P^{m})_{m \in \mathbb{N}}$ is a q-MLT (For each $m$, the sequence $(p^{m}_{i})_{m\leq i}$ is computable since $B$ and $(A^{m}_{i})_{m\leq i}$ are computable. Condition 3 in Definition \ref{def:10} implies that for all $i$, range$(p^{m}_{i})\subseteq$range$(p^{m}_{i+1})$. So, $P^{m}$ is a q-$\Sigma_{0}^{1}$ class for all $m$. $k^{m,i} \leq 2^{i-m}$ for all $m,i$ implies that $\tau(P^m)\leq 2^{-m}$ for all $m$. Since $(S^{m})_{m\in \mathbb{N}}$ is a MLT, $(P^{m})_{m\in \mathbb{N}}$ is a computable sequence.)
For all m, $ (2^{\omega}/MLR) \subseteq S^{m}$ holds by the definition of a universal MLT. Hence, since \ref{eq:20} is an increasing union and as $\mu^{B}_{\rho}(2^{\omega}/MLR)>\delta$, for all $m$ there exists an $i(m)>m$ such that
\begin{align}
\label{eq:22}
\mu^{B}_{\rho}(\llbracket A^{m}_{i(m)} \rrbracket)> \delta.
\end{align} Fix such an $m$ and corresponding $i=i(m)$ and let $A^{m}_{i}= \{\tau_{1},\dots,\tau_{k^{m,i}}\}$ for some $k^{m,i} \leq 2^{i-m}$ as in \ref{eq:23}. By \ref{eq:21} and \ref{eq:22}, we have that
\begin{align}
\label{eq:24}
\delta<\sum_{a\leq k}   p(\tau_{a}) = \sum_{a\leq k^{m,i}} tr \big[ \rho_i (|\bigotimes_{q=1}^{i} b^{q}_{\tau_{a}(q)}\big> \big<\bigotimes_{q=1}^{i} b^{q}_{\tau_{a}(q)}\big|\big)\big]=tr \big[ \rho_i\sum_{a\leq k^{m,i}} (|\bigotimes_{q=1}^{i} b^{q}_{\tau_{a}(q)}\big> \big<\bigotimes_{q=1}^{i} b^{q}_{\tau_{a}(q)}\big|\big)\big]
\end{align}
So, by \ref{eq:23} and \ref{eq:24}, we see that for all $m$ there is an $i$ such that,
\[\delta < tr[\rho_{i}p^{m}_{i}] \leq \rho(P^m).\]
So, inf$_{m}(\rho (P^{m}))>\delta$, contradicting that $\rho$ is q-MLR.
\end{proof}

\begin{defn}
$\rho=(\rho_{n})_{n \in \mathbb{N}}$ is computable if the sequence $(\rho_{n})_{n \in \mathbb{N}}$ is computable.
\end{defn}

\begin{thm}
\label{thm:0000}
There is a computable state which is not q-MLR but is mR.
\end{thm}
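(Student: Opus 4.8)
The plan is to take $\rho_n:=\tfrac{1}{\dim V_n}P_{V_n}$, the maximally mixed state on a carefully chosen subspace $V_n\subseteq\mathbb{C}^{2^n}$. We want a computable sequence $(V_n)_{n}$ that is (a) \emph{coherent}, $PT_{\mathbb{C}^{2}}(P_{V_{n+1}})=\tfrac{\dim V_{n+1}}{\dim V_n}P_{V_n}$ (so that $(\rho_n)_n$ is a state), (b) \emph{thin}, $\dim V_n/2^n\to 0$, and (c) \emph{spread}: there is one constant $C$ with $\|P_{V_n}|b\rangle\|^2\le C\,\dim V_n/2^n$ for every $n$ and every product vector $|b\rangle=\bigotimes_{i\le n}|b_i\rangle\in\mathbb{C}^{2^n}$. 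Granting this, the theorem is immediate. \emph{Not q-MLR:} given $m$, let $i_m$ be least with $\dim V_{i_m}/2^{i_m}\le 2^{-m}$ and let $P^m=(p^m_i)_i$ be the q-$\Sigma^0_1$ class with $p^m_i=0$ for $i<i_m$ and $p^m_i=P_{V_{i_m}}\otimes I^{\otimes(i-i_m)}$ for $i\ge i_m$; by (b) and computability of $(V_n)_n$, $(P^m)_m$ is a q-MLT with $\tau(P^m)=\dim V_{i_m}/2^{i_m}\le 2^{-m}$, while coherence yields $\mathrm{tr}(\rho_i\,p^m_i)=\mathrm{tr}(\rho_{i_m}P_{V_{i_m}})=1$ for $i\ge i_m$, so $\rho(P^m)=1$ and $\inf_m\rho(P^m)=1>0$. \emph{mR:} fix a computable measurement system $B$ and write $\lambda$ for the uniform measure on $2^{\omega}$; by Remark~\ref{rem:1}, $\mu^{B}_{\rho}$ is computable, and by~\eqref{eq:21}, $\mu^{B}_{\rho}(\sigma)=\langle b^\sigma|\rho_n|b^\sigma\rangle=\tfrac{1}{\dim V_n}\|P_{V_n}|b^\sigma\rangle\|^2\le C\,2^{-|\sigma|}$ for every $\sigma\in 2^n$, since $|b^\sigma\rangle$ is a product vector. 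Hence $\mu^{B}_{\rho}\le C\lambda$ on cylinders, so $\mu^{B}_{\rho}(E)\le C\lambda(E)$ for every Borel $E$ by outer regularity; as the complement of $MLR$ (indeed of the set of arithmetically random sequences) is $\lambda$-null, it is $\mu^{B}_{\rho}$-null, so $\mu^{B}_{\rho}(MLR)=1$.

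So the work is to build $(V_n)_n$, which I would do recursively. Put $V_0=\mathbb{C}$, and fix a sparse computable set $H\subseteq\mathbb{N}$ of ``halving stages''; since $\dim V_n=2^{\,n-|H\cap[1,n]|}$ along the construction, $H=\{2^k:k\ge 1\}$ makes (b) hold and also makes $\sum_{n\in H}(\dim V_{n-1})^{-1/2}$ converge. At a stage $n+1\notin H$ set $V_{n+1}:=V_n\otimes\mathbb{C}^2$; this preserves coherence, keeps $\dim V/2^{(\cdot)}$ fixed, and, because $\|P_{V_n\otimes\mathbb{C}^2}(|b'\rangle\otimes|c\rangle)\|^2=\|P_{V_n}|b'\rangle\|^2$, keeps the spread constant fixed. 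At a stage $n+1\in H$ I use a \emph{coherent halving}: for a unitary $U$ of $V_n$ and an orthonormal basis $(v_j)$ of $V_n$, let $V_{n+1}(U):=\operatorname{span}\{\tfrac{1}{\sqrt 2}(v_j\otimes|0\rangle+Uv_j\otimes|1\rangle)\}$; one checks that the generating family is orthonormal and that $PT_{\mathbb{C}^{2}}(P_{V_{n+1}(U)})=P_{V_n}$, so $(\rho_n)_n$ stays coherent and $\dim V_{n+1}=\dim V_n$ (the fraction halves). The crucial point is that $U$ can be chosen so the spread constant grows by only a factor $1+(\dim V_n)^{-1/2}$: for a product vector $|b'\rangle\otimes|c\rangle$, writing $q:=P_{V_n}|b'\rangle$, one computes
\[
\langle b'\otimes c\,|\,P_{V_{n+1}(U)}\,|\,b'\otimes c\rangle=\tfrac12\|P_{V_n}|b'\rangle\|^2+\mathrm{Re}\big(\gamma\,\langle q|U|q\rangle\big),
\]
where $|\gamma|\le\tfrac12$ depends only on $|c\rangle$; the first summand is $\le\tfrac12 C_n\dim V_n/2^n=C_n\dim V_{n+1}/2^{n+1}$ (with $C_n$ the current spread constant), and $\langle q|U|q\rangle$ has mean $0$ and sub-Gaussian fluctuations of scale $\|q\|^2(\dim V_n)^{-1/2}$ as $U$ runs over the Haar measure on $\mathcal{U}(V_n)$. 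A union bound over a sufficiently fine net of the product vectors of $\mathbb{C}^{2^{n+1}}$ (there are $(O(1/\varepsilon))^{O(n)}$ of them), together with the Lipschitz continuity of $b\mapsto\langle b|P_{V_{n+1}(U)}|b\rangle$, shows that the set of $U$ for which $V_{n+1}(U)$ has spread constant at most $(1+(\dim V_n)^{-1/2})\,C_n$ is nonempty and open, hence contains an algebraic unitary, which can be located (with a certifying net) by exhaustive search. This makes the whole construction computable, with final spread constant $C=\prod_{n\in H}(1+(\dim V_{n-1})^{-1/2})<\infty$.

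The main obstacle is this last estimate: pinning down the concentration of $\langle q|U|q\rangle$ and choosing the net fine enough that certification on the net really certifies spread everywhere, while checking that $\dim V_n\gtrsim 2^n/n$ stays comfortably above the $\sim n^2$ threshold the union bound demands. The remaining ingredients --- the partial-trace identities, the q-MLT bookkeeping, and the step deriving arithmetic randomness (not just $MLR$) from $\mu^{B}_{\rho}\le C\lambda$ --- are routine.
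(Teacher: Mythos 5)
Your proposal is correct in outline but takes a genuinely different route from the paper's. The paper uses a fully explicit state $\rho=\bigotimes_{n\ge 5}d_n$, where $d_n$ has $2^{-n}$ on the diagonal and on $r_n=\lfloor 2^n/n\rfloor$ anti-diagonal corners; non-q-MLR-ness comes from the rank deficit of $d_n$, and mR-ness requires real work because the per-block deviation of $\langle W|d_n|W\rangle$ from $2^{-n}$ for product vectors $W$ is of relative size $2/n$, which is \emph{not} summable, so $\prod_n(1+2/n)$ diverges and no uniform domination $\mu^{B}_{\rho}\le C\lambda$ holds. The paper therefore runs a Levin--Schnorr argument: it introduces the auxiliary measure $\mu'=2\mu-\lambda$, extracts two one-sided inequalities, multiplies them to exploit cancellation in the signed errors $\delta_i$, and concludes $MLR(\mu)\subseteq MLR$ via complexity bounds. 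You instead engineer the state (maximally mixed on a recursively built subspace $V_n$, halved at sparse stages via a Haar-typical ``coherent halving'') precisely so that the relative errors \emph{are} summable, which collapses the mR argument to a one-line absolute-continuity statement: $\mu^{B}_{\rho}\le C\lambda$ transfers all $\lambda$-null sets, immediately yielding not just mR but the paper's Section~4 strong mR/arithmetic randomness for free. The trade-off is that your state is not explicit: its computability rests on an effective search for a good algebraic unitary at each halving stage, justified by a concentration-of-measure plus net plus union-bound argument that you correctly flag as the real work. Two small quantitative corrections there: the union bound over the $2^{O(n^2)}$-point net of product vectors forces the per-stage growth factor to be $1+O(n^{3/2}(\dim V_n)^{-1/2})$ rather than $1+(\dim V_n)^{-1/2}$ (still summable over $H=\{2^k\}$, so nothing breaks), and to get an \emph{open} set of good unitaries you must certify a strict inequality with margin, which the probabilistic argument does supply. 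With those details filled in, your construction proves the theorem and in fact a slightly stronger statement than the paper's Lemma~\ref{lem:24}, at the cost of a considerably less concrete witness than the paper's $d_n$.
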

\begin{proof}
All matrices in this proof are in the standard basis.
Let $\rho= \bigotimes_{n=5}^{\infty} d_{n}$ and for $N>5$, $S_{N}:= \bigotimes_{n=5}^{N} d_{n}$.
where $d_n$ is a $2^n$ by $2^n$ matrix with $2^{-n}$ along the diagonal and $r_{n}:=\lfloor 2^n/n \rfloor $ many $2^{-n}$s on the extreme ends of the anti-diagonal. Formally, define $d_n$ to be the symmetric matrix such that:
For $i \leq r_n$, $d_{n}(i,j)= 2^{-n}$ if $j=i$ or $j=2^{n}-i+1$ and $d_{n}(i,j)=0$ otherwise.
For $ r_n < i < 2^{n}-r_{n}$, $d_{n}(i,j)= 2^{-n}$ if $j=i$ and $d_{n}(i,j)=0$ otherwise. For example, $r_{3}=2$ and so, 
\[   d_3 = 
\begin{bmatrix}
2^{-3} & 0 & 0 & 0 & 0 & 0 & 0 & 2^{-3}\\
0 & 2^{-3} & 0 & 0 & 0 & 0 & 2^{-3}  & 0\\
0 & 0 & 2^{-3} & 0 & 0 & 0 & 0 & 0\\
0 & 0 & 0 & 2^{-3}  & 0 & 0 & 0 & 0\\
0 & 0 & 0 &  0 & 2^{-3} & 0 & 0 & 0\\
0 & 0 & 0 & 0 & 0 & 2^{-3} & 0 & 0\\
0 & 2^{-3}  & 0 & 0 & 0 & 0 & 2^{-3}  & 0\\
2^{-3} & 0 & 0 & 0 & 0 & 0 & 0 & 2^{-3}\\
\end{bmatrix}
\]
Clearly, $d_n$ is a density matrix.
The theorem will be proved via the following lemmas.
\begin{lem}
\label{lem:20}
$\rho$ is not q-MLR.
\end{lem}
\begin{proof} It is easy to see that zero has multiplicity $r_n$ as an eigenvalue of $d_n$. Hence, letting $q_{n}= 2^{n}-r_{n}$, the eigenpairs of $d_n$ can be listed as $\{\alpha^{n}_{i}, v^{n}_i\}_{i=1}^{2^{n}}$ where $\alpha^{n}_{i}=0$ if $ q_{n}+1 \leq i \leq 2^{n} $ and $(v^{n}_i)_{i=1}^{2^{n}}$ is a orthonormal basis of $\mathbb{C}^{2^{n}}$.

Fix a $N>5$. By properties of the Kronecker product, $S_{N}$ has a orthonormal basis of eigenvectors:  \[\{ \bigotimes_{n=5}^{N} v^{n}_{l(n)}: (l(n))_{n=5}^{N} \text{ is a sequence such that for all } n, l(n) \leq 2^n  \},\]
and $\bigotimes_{n=5}^{N} v^{n}_{l(n)}$ has eigenvalue $ \prod_{n=5}^{N} \alpha^{n}_{l(n)} $.
Letting $M_N$ be those elements of the above eigenbasis having non-zero eigenvalues, we have that

\[
\label{eq:eig}
M_{N}=\{ \bigotimes_{n=5}^{N} v^{n}_{l(n)}: (l(n))_{n=5}^{N} \text{ is a sequence such that for all } n,  l(n) \leq q_{n}  \}.\]
By the definition of $q_n$, \[|M_N|= \prod_{n=5}^{N} 2^{n}-\lfloor 2^n/n \rfloor \leq \prod_{n=5}^{N} 2^{n}- (2^n/n) + 1 = \prod_{n=5}^{N} 2^{n}(1 - n^{-1} + 2^{-n}) = \prod_{n=5}^{N} 2^{n} \prod_{n=5}^{N}  (1 - n^{-1} + 2^{-n}).\]
Noting that $\prod_{n=5}^{\infty}  (1 - n^{-1} + 2^{-n}) =0 $,  define a q-MLT $(T_{m})_{m\in \mathbb{N}}$ as follows. Given $m$, we describe the construction of $T_{m}$. Find $N=N(m)$ such that
$\prod_{n=5}^{N}  (1 - n^{-1} + 2^{-n})< 2^{-m}$. Let $\gamma(N):= \sum_{n=5}^{N} n$  and let \[p_{\gamma(N)}=   \sum_{v \in M_{N}}   |v \big>\big<v |.\]
$p_{\gamma(N)}$ is a special projection on $\mathbb{C}^{2^{{\gamma(N)}}}$ having rank equal to $|M_{N}|$. Let $p_{k}=\emptyset$ for $k<\gamma(N)$ and \[p_{k}:= p_{\gamma(N)} \otimes \bigotimes_{i=1} ^{k-\gamma(N)} I   \]
for $k>\gamma(N)$. Using that $\rho$ is computable, it is easy to see that $(p_{k})_{k\in \mathbb{N}}$ is a q-$\Sigma^{0}_{1}$ class. Let $T_{m}:= (p_{k})_{k\in \mathbb{N}}$. $(T_{m})_{m\in \mathbb{N}}$ is a q-MLT since the choice of $N(m)$ implies that $\tau(T_{m}) < 2^{-m}$ and as $N(m)$ can be computed from $m$. $(T_{m})_{m\in \mathbb{N}}$ demonstrates that $\rho$ is not q-MLR as follows. Fix $m$ arbitrarily and let $N(m)$ be as above. Recalling that $M_{N}$ is the set consisting of all eigenvectors of $S_{N}$ with non-zero eigenvalue, we have that, \[\rho(T_{m}) \geq tr(\rho_{\gamma(N)}p_{\gamma(N)}) =  tr(S_{N}p_{\gamma(N)})=tr(S_{N})=1.\] Since $m$ was arbitrary, $inf_{m \in \mathbb{N}}(\rho(T_{m}))=1$.
\end{proof}
The following technical lemma, although seems unmotivated at this juncture, is crucial at a later point in the proof.  
\begin{lem}
\label{lem:21}
Let $\{[a_{i}, b_{i}]^{T}\}_{i=1}^{n} $ be a set of unit column vectors in $\mathbb{C}^2$. Let $V=\bigotimes_{i=1}^{n}[a_{i}, b_{i}]^{T}$ be their Kronecker product. If  $V=[v_{1},v_{2},\dots,v_{2^n}]^{T}$, then for all $k \leq 2^{n-1}$, we have that \[|v_{k}||v_{2^{n}-k+1}| = \prod_{i=1}^{n} |a_{i}||b_{i}|.\]
\end{lem}
\begin{proof}

For natural numbers $u$ and $q$, let $[u]_{q}$ denote the remainder obtained by dividing $u$ by $q$. We use the following convention for the Kronecker product \cite{Regalia:1989:KPU:76594.76599}:

\[\begin{bmatrix}
a_1\\
b_1\\
\end{bmatrix}
\otimes
\begin{bmatrix}
a_2\\
b_2\\
\end{bmatrix}
=
\begin{bmatrix}
a_1 a_2\\
b_1 a_2\\
a_1 b_2\\
b_1 b_2
\end{bmatrix}
.\]
So, $v_{1}= \prod_{i=1}^{n} a_{i}$ and $v_{2^{n}}= \prod_{i=1}^{n} b_{i}$. For any $k\leq 2^{n-1}$, $v_k$ has the form $v_k= \prod_{i=1}^{n} c^{k}_{i}$, for some  $c^{k}_{i} \in \{a_{i}, b_{i}\}$ and $v_{2^{n}-k+1}$ has the form $v_{2^{n}-k+1} = \prod_{i=1}^{n} e^{k}_{i}$, for some  $e^{k}_{i} \in \{a_{i}, b_{i}\}$. Note that $c^{k}_{1}=a_{1}$ if and only if $k$ is odd if and only if   $e^{k}_{1}=b_{1}$. Similarly, we have the following. $c^{k}_{2}=a_{2}$ if and only if $[k]_{2^{2}} \in \{1,2\}$ if and only if $e^{k}_{2}=b_{2}$.
$c^{k}_{3}=a_{3}$ if and only if $[k]_{2^{3}} \in \{1,\dots,2^{2}\}$ if and only if $e^{k}_{3}=b_{3}$.
In general, for $i\leq n$, for all $k\leq 2^{n-1}$, \[c^{k}_{i}=a_{i} \iff  [k]_{2^{i}} \in \{1,\dots,2^{i-1}\} \iff e^{k}_{i}=b_{i}.\]
This proves the lemma. Intuitively, this happens for the following reason. Imagine moving from $v_1$ to $v_{2^{n-1}}$ (by incrementing $k$) and keeping track of the values of $c^{k}_i$ as you move along the $v_k$s. Also, imagine moving from $v_{2^n}$ to $v_{2^{n-1}}$ and keeping track of the values of $e^{k}_i$ as you move along the $v_{2^n - k +1}$s. Both motions are in opposite directions since as $k$ is incremented, the first motion is from lower to higher indices and the second is from higher to lower indices. Consider the behavior of $c^{k}_1,e^{k}_1$ as $k$ is incremented. At the `start' point, $c^{1}_1=a_1$, $e^{1}_1=b_1$. Now, as you move (i.e as you increment $k$), $c^{k}_1$ alternates between $a_1$ and $b_1$ equalling it's starting value, $a_1$ at odd $k$s and $e^{k}_1$ alternates between $b_1$ and $a_1$ equalling it's starting value $b_1$  for odd $k$s. Now, take any $i\leq n$.
$c^{k}_i$ alternates between $a_i$ and $b_i$ in blocks of length $2^{i-1}$. $c^{k}_i=a_i$ when $k$ is in the first block, $\{1,2,\dots, 2^{i-1}\}$ (i.e, when $[k]_{2^{i}} \in \{1,2,\dots, 2^{i-1}\}$) and $c^{k}_i=b_i$ when $k$ is in the second block, $\{2^{i-1}+1,\dots, 2^{i}\}$(i.e, when $[k]_{2^{i}} \in \{2^{i-1}+1,\dots, 0\}$) and so on. Similarly, $e^{k}_i$ alternates between $b_i$ and $a_i$ in blocks of length $2^{i-1}$.
\end{proof}

\begin{lem}
\label{lem:22}
Let $n\in \mathbb{N}$ and let $\{[a_{i}, b_{i}]^{T}\}_{i=1}^{n} $ be such that for all i, $[a_{i}, b_{i}]^{T}$ is unit column vector in $\mathbb{C}^2$ and let $W=\bigotimes_{i=1}^{n}[a_{i}, b_{i}]^{T}$. Then, $|\big<W|d_{n}|W\big>| \in [2^{-n}(1-2n^{-1}),2^{-n}(1+2n^{-1})]$
\end{lem}
\begin{proof}

Fix $n$ and $V$ as in the statement and write $d_n$ as a block matrix with each block of size $2^{n-1}$ by $2^{n-1}$.
\[d_n = \begin{bmatrix}
A & B\\
B^{T} & A\\
\end{bmatrix}.
\]
Letting $V=\bigotimes_{i=1}^{n-1}[a_{i}, b_{i}]^{T}$, in block form, $W=[a_{n}V^{T}, b_{n}V^{T}]^{T} $. Let $V=[v_{1},v_{2},\dots,v_{2^{n-1}}]^{T}$. It is easily checked that 
\[\big<W|d_{n}|W\big>= 2^{-n} +  a_{n}^{*} b_{n} V^{\dagger}BV+ a_{n}b_{n}^{*} V^{\dagger}B^{T}V.\]
By the form of B we get, \[V^{\dagger}BV = 2^{-n} [v^{*}_{1},v^{*}_{2},\dots,v^{*}_{2^{n-1}}][v_{2^{n-1}},v_{2^{n-1}-1},\dots,v_{2^{n-1}-r_{n}+1},0,\dots ,0]^{T}. \]

\[= 2^{-n}\sum_{k=1}^{r_{n}}v^{*}_{k}v_{2^{n-1}-k+1}.\]
By the previous lemma, \[|V^{\dagger}BV| \leq 2^{-n}\sum_{k=1}^{r_{n}}|v_{k}||v_{2^{n-1}-k+1}| = 2^{-n} r_{n} \prod_{i=1}^{n-1} |a_{i}||b_{i}| = 2^{-n} r_{n} \prod_{i=1}^{n-1} |a_{i}|\sqrt{1-|a_{i}|^{2}}. \]
Since $x\sqrt{1-x^{2}}$ has a maximum value of $1/2$ and recalling definition of $r_n$,
\[|V^{\dagger}BV| \leq 2^{-n}\dfrac{1}{2^{n-1}}\dfrac{2^{n}}{n} = \dfrac{2^{1-n}}{n}. \]
Similarly, 
$|V^{\dagger}B^{T}V| \leq  \dfrac{2^{1-n}}{n}.$ Noting that $|a_{n}^{*} b_{n}|, |a_{n} b_{n}^{*}| \leq 1/2$, \[ |\big<W|d_{n}|W\big>| \leq 2^{-n} +  |a_{n}^{*} b_{n} V^{\dagger}BV| + |a_{n}b_{n}^{*} V^{\dagger}B^{T}V|\leq 2^{-n} +  \dfrac{2^{1-n}}{n},\]
and 
\[ |\big<W|d_{n}|W\big>| \geq 2^{-n} - |a_{n}^{*} b_{n} V^{\dagger}BV| - |a_{n}b_{n}^{*} V^{\dagger}B^{T}V|\geq 2^{-n} -  \dfrac{2^{1-n}}{n}.\]
\end{proof}
\begin{lem}
\label{lem:24}
$\rho$ is mR.
\end{lem}
If $p$ is any measure on $2^{\omega}$, we can define Martin-L{\"o}f randomness with respect to $p$ exactly as we defined it for the uniform measure. Denote by $MLR(p)$, the set of bitstrings Martin-L{\"o}f random with respect to $p$ \cite{bookE}.\\
\begin{proof}
We use ideas similar to Theorem 196(a) in \cite{bookE}. For convenience, for all $i>5$, define \[\beta_{i} := \sum_{q=5}^{i-1}q.\]
Let $B$ be any computable measurement system. We show that $ MLR(\mu^{B}_{\rho}) \subseteq MLR$. Since $\mu^{B}_{\rho}[MLR(\mu^{B}_{\rho})]=1$, this implies that $\mu^{B}_{\rho}(MLR)=1$. Denote $\mu^{B}_{\rho}$ by $\mu$ for convenience. Let $\lambda$ denote the uniform measure.  We will abuse notation by writing $\mu(\tau)$ instead of the more cumbersome $\mu(\llbracket\tau\rrbracket)$ for $\tau \in 2^{<\omega}$.
Let $X \in MLR(\mu)$. Write $X$ as a concatenation of finite bitstrings : $X=\sigma_{5}\sigma_{6}\dots \sigma_{n}\dots$ where $\sigma_{n} \in 2^n$ for all $n\in \mathbb{N}$. Let $S_n := \sigma_{5}\sigma_{6}\dots \sigma_{n}$ be the concatenation upto $n$. Let $\mu_{i}$ be such that for all $\tau \in 2^i$, \[\mu_{i}(\tau):= tr\big[d_{i}(|\bigotimes_{q=1}^{i} b^{q+\beta_{i}}_{\tau(q)}\big> \big<\bigotimes_{q=1}^{i} b^{q+\beta_{i}}_{\tau(q)}|)\big].\]
By \ref{eq:21} and by the form of $\rho$ we see that,
\[\mu(S_{n})= \prod_{i=5}^{n} \mu_{i}(\sigma_{i}).\]
Note that $\mu$ is computable \cite{bookE} since $\rho$ and $B$ are. Since $X \in MLR(\mu)$, by the Levin-Schnorr theorem (Theorem 90, section 5.6 in \cite{bookE}) there is a $C_{1}$ such that
\begin{align*}
    \forall n ,  -\log(\mu(S_{n})) - C_{1} \leq KM(S_{n}). 
\end{align*}
By Theorem 89, section 5.6 in \cite{bookE} fix a $C_2$ such that
\begin{align*}
    \forall n ,   KM(S_{n}) \leq -\log(\lambda(S_{n})) + C_{2}.
\end{align*} 
By these inequalities and taking exponents, we see that there is a constant $\alpha>0$ such that \[\forall n,  \mu(S_{n}) \geq \alpha \lambda(S_{n}).\]
Letting $r_{i}:= \mu_{i}(\sigma_{i})$ and $\delta_{i}:= \lambda(\sigma_{i})-r_{i}$ in the above,
\begin{align}
\label{eq:25}
    \forall n,  \prod_{i=5}^{n} r_{i}  \geq \alpha \prod_{i=5}^{n} r_{i}+ \delta_{i}.
\end{align}
Let $\mu'$ be a probability measure on $2^{\omega}$ such that for all $\sigma\in 2^{<\omega}, \mu'(\sigma):=2\mu(\sigma)-\lambda(\sigma)$. In particular, this implies that \[\forall n,\mu'(S_{n}) = \prod_{i=5}^{n} r_{i}-\delta_{i}.\]
Note that $\mu'$ is computable since $\mu$ and $\lambda$ are. Applying the same argument which resulted in \ref{eq:25}, we get that there is an $\epsilon>0$ such that,
\begin{align}
\label{eq:26}
    \forall n,  \prod_{i=5}^{n} r_{i}  \geq \epsilon \prod_{i=5}^{n} r_{i}- \delta_{i}.
\end{align}
By Lemma \ref{lem:22}, for all $i, r_{i}\in [2^{-i}(1-2i^{-1}),2^{-i}(1+2i^{-1})]. $
So, $|\delta_{i}|=|r_{i}-2^{-i}| \in [0,2^{-i+1}i^{-1}]$.
Hence, \\$r_{i}+\delta_{i} \geq 2^{-i}-2^{-i+1}i^{-1} - 2^{-i+1}i^{-1} = 2^{-i}[1-4i^{-1}]>0$, since $i\geq 5$. Similarly, $r_{i}-\delta_{i}\geq 0$. By this, multiplying \ref{eq:25} and \ref{eq:26} gives,
\begin{align}
\label{eq:27}
    \forall n,  \prod_{i=5}^{n} r^{2}_{i}  \geq \alpha \epsilon \prod_{i=5}^{n} r^{2}_{i}- \delta^{2}_{i}=\alpha \epsilon\prod_{i=5}^{n} r^{2}_{i} \prod_{i=5}^{n} \big(1-\dfrac{\delta^{2}_{i}}{r^{2}_{i}}\big).
\end{align}
By the above, \[\dfrac{|\delta_{i}|}{r_{i}} \leq \dfrac{2^{-i+1}i^{-1}}{2^{-i}(1-2i^{-1})} = 2(i-2)^{-1}.\] 
Letting $F>0$ be the constant,
\begin{align*}
    \forall n,  \prod_{i=5}^{n} \big(1-\dfrac{\delta^{2}_{i}}{r^{2}_{i}}\big) \geq \prod_{i=5}^{\infty} \big(1-\dfrac{\delta^{2}_{i}}{r^{2}_{i}}\big) \geq \prod_{i=5}^{\infty} \big(1-4(i-2)^{-2}\big)=F,
\end{align*}
\ref{eq:27} gives,
\begin{align}
\label{eq:28}
    \forall n,  (\alpha \epsilon)^{-1}\prod_{i=5}^{n} r^{2}_{i}  \geq  \prod_{i=5}^{n} r^{2}_{i}- \delta^{2}_{i} \geq \prod_{i=5}^{n} r^{2}_{i} F.
\end{align}
From \ref{eq:25}, \ref{eq:26} and \ref{eq:28}, it is easy to see that there is a $G>0$ such that for all $n$
\[\prod_{i=5}^{n} r_{i}+ \delta_{i} \geq G \prod_{i=5}^{n} r_{i}. \]
Recalling the definitions of $r_i$ and $\delta_i$, 
\[\forall n, \lambda(S_n) \geq G \mu(S_n).  \]
Letting $D= C_{1} -\log(G)$ and recalling the definition of $C_1$,
\[\forall n, -\log(\lambda(S_n)) \leq  -\log(\mu(S_n)) -\log(G) \leq  KM(S_n) + D. \]
By Theorem 85 in \cite{bookE}, $KM(.) \leq K(.) + O(1)$ and so there is a $E>0$ such that
\[\forall n, -\log(\lambda(S_n))  \leq  K(S_n) + E. \]
Noting that $-\log(\lambda(S_n))=|S_n|= \beta_{n}+n$, 3.2.14 from \cite{misc} implies that $X$ is MLR.
\end{proof}
The theorem is proved. 
\end{proof}
\section{Generalizations and future directions}
We sketch some ways in which the previous section's results generalize. Given $S\in 2^{\omega}$, we may relativize the notion of Martin-L{\"o}f randomness to define the set $MLR^{S} \subset 2^{\omega}$ of infinite bitstrings which are Martin-L{\"o}f random with respect to $S$. The halting problem, denoted by $\emptyset^{\prime} \subset \mathbb{N}$ is an incomputable set important in computability theory. Letting $\emptyset^{(n)}$ be the $n-1$th iterate of the halting problem, an element of Cantor space is said to be \emph{arithmetically random} if it is in $MLR^{\emptyset^{(n)}}$ for every $n$ (see 6.8.4 in \cite{misc1}). Given 
$ S \in 2^{\omega} $, relativizing the proof of Lemma \ref{lem:24} shows that $MLR^{S}(\mu^{B}_{\rho}) \subseteq MLR^{S}$ as follows. Take an $X\in MLR^{S}(\mu^{B}_{\rho})$. Relativizing Theorems 85 and 90 from \cite{bookE} and 3.2.14 from \cite{misc} to $S$ and noting that $KM^{S}(.) \leq KM(.)$ and following the proof of Lemma \ref{lem:24}  shows that 
$X \in MLR^{S}$. This shows that $\mu^{B}_{\rho}(MLR^{S})=1$ holds for any $S \in 2^{\omega}$ and any computable measurement system $B$. In particular, this has an interesting application; if $B$ is any computable measurement system, 
$\mu^{B}_{\rho}(MLR^{\emptyset^{(n)}})=1$ for all $n$. So, \[\mu^{B}_{\rho}\big[\bigcap_{n\in \mathbb{N}}(MLR^{\emptyset^{(n)}})\big]=1.\]
So, measuring $\rho$ in any computable measurement system yields an arithmetically random  infinite sequence of bits, with probability one. The above note naturally suggests a definition:

\begin{defn}
$\rho$ is said to be strong measurement random (strong mR), if $\mu^{B}_{\rho}(MLR^{S})=1$ holds for any $S \in 2^{\omega}$ and any computable measurement system $B$. 

\end{defn}
By Remark \ref{rem:1} and by the above discussion on relativizations, we can also consider measurement of a state in non-computable measurement systems by using an appropriate oracle. We do not explore this here.

One may ask if we can build other computable examples of $\rho$s which are not q-MLR and are mR. We note that a straightforward modification of the proof of Theorem 3.4 yields a family of such $\rho$s. We do not provide all the details here for lack of space. 
 Let $h: \mathbb{N} \longrightarrow \mathbb{N}$ and $g: \mathbb{N} \longrightarrow (0,1)$ be computable, satisfying the following for some constants $\delta \in (0,1)$ and $F>0$: \[\prod_{n=5}^{\infty} (1- h(n)2^{-n})=0 , \prod_{n=5}^{\infty} (1- h(n)[2^{-n}-g(n)])= \delta,\]\[ \forall n, g(n) \leq 2^{-n} \text{ and } \prod_{n=5}^{\infty} \big[1- \dfrac{4g^{2}(n)h^{2}(n)}{(1-2g(n)h(n))^{2}}\big] =F .\] 
Let $\rho$ be defined as in the proof of the main Theorem but with $r_n$ replaced by $h(n)$ and with the $h(n)$ many entries on the extreme ends of the anti-diagonal of $d_n$ being equal to $g(n)$ instead of $2^{-n}$. Then, this $\rho$ is computable and mR (in fact, it is strong mR) and fails a q-MLT at order $\delta$.
\\
\\
We are working towards characterizing the set of states for which mR and q-MLR are equivalent. So far, we have shown that these notions are equivalent for states of the form $\otimes_{n=1}^{\infty} d$ for some computable density matrix $d$.\\
\\
One may imagine using a sequence of POVMs in \ref{defn:78} instead of a sequence orthonormal bases. We use orthonormal bases to conform with the  work  \cite{unpublished } on which ours is based, which uses projective measurements and not POVMs. It would also be interesting to replicate the approach of \cite{unpublished} using POVMs instead of projective measurements.
\section{Concluding comments}
Measuring a finite dimensional quantum system or a composite system of finitely many qubits is a pivotal concept in quantum information theory \cite{bookA}. It hence seems natural to consider defining a notion of measuring a state. Since measurement of a state yields an infinite sequence of bits, it is interesting to explore the relation between the randomness of the measured state and the randomness of the resulting sequence. This paper is motivated by these questions. The main result is that q-MLR is not equivalent to mR even for the computable states.
\subsection{Remark}
Intuitively, the non-equivalence of mR and q-MLR should not be surprising given that entanglement in composite systems cannot be detected by independent measurements of the subsystems. Let us elaborate on this remark. $\rho$ in \ref{thm:0000} is built up from $d_n$s where each $d_n$ has $r_n$ many entangled eigenvectors with non-zero eigenvalue and $r_n$ many entangled eigenvectors with zero eigenvalue. This inhomogeneity in the distribution of eigenvalues is solely due to these entangled eigenvectors (all the $2^{n}- 2r_n$ many non entangled eigenvectors of $d_n$ have the same  non-zero eigenvalues). A crucial part in showing that $\rho$ is non q-MLR was to use the inhomogeneous eigenvalue distribution to bound the size  $M_N$ (see \ref{eq:eig} in the proof of  \ref{lem:20}). Heuristically speaking, the the non-quantum randomness of $\rho$ was a reflection of the non-uniform eigenvalue distribution of $d_n$ which in turn was due to by the presence of entangled eigenvectors of $d_n$.   It is hence reasonable to expect that the quantum non-randomness of $\rho$, which stems from entanglement, cannot be captured by measurements in the sense of \ref{defn:78} using pure tensors (i.e. measuring each 2-dimensional subsystem independently).

\subsection{Notes} 
We have shown how to extract classical arithmetic randomness from a computable, non-quantum random sequence of qubits. It seems plausible that our results may prove to be relevant to the construction of quantum random number generators \cite{RevModPhys.89.015004}.
Abbott, Calude and Svozil have also studied classical bit sequences resulting from measuring a quantum system\cite{DBLP:phd/hal/Abbott15,DBLP:conf/birthday/AbbottCS15}. However, their notion of measurement is significantly different from ours. In contrast to our work which considers measurement of an infinite sequence of qubits, they studied the randomness of a sequence of bits generated by \emph{repeatedly measuring a finite dimensional} quantum system. They go on to apply this to quantum random number generators and their certification \cite{DBLP:phd/hal/Abbott15,DBLP:journals/mscs/AbbottCS14,DBLP:conf/birthday/AbbottCS15}.

\subsection{Acknowledgements}

I thank James Hanson for conjecturing that q-MLR is equivalent to mR when $d$ (see section 4) is four by four in the above. Joe Miller and Peter Cholak (independently) asked if there is a notion of `measuring a state' and if one gets a MLR bitstring from measuring a q-MLR state with respect to such a notion. These questions were one of the factors which led me to explore this area. Andr\'e Nies, whom I thank for introducing me to quantum algorithmic randomness, independently suggested that one might get a measure on Cantor space by `measuring' a state.  I thank Joe Miller, my thesis advisor, for his encouragement and guidance.

\nocite{*}
\bibliographystyle{eptcs}
\bibliography{references}
\end{document}